\theoremstyle{definition}
\newtheorem{definition}{Definition}[section]
\newtheorem{example}[definition]{Example}
\theoremstyle{plain}
\newtheorem{theorem}[definition]{Theorem}
\newtheorem{lemma}[definition]{Lemma}
\newtheorem{remark}[definition]{Remark}
\newcommand{\nc}{\newcommand}
\nc{\weak}{\rightharpoonup}
\nc{\weakstar}{\stackrel{\ast}{\rightharpoonup}}
\nc{\modular}[1]{{\stackrel{ #1}{\longrightarrow\,}}}
\title{\LARGE \bf{New characterizations of completely useful topologies in mathematical utility theory}}
\author{ \textsl{Gianni Bosi}$\,^{1}$, \textsl{Roberto Daris}$\,^{1}$ and \textsl{Gabriele Sbaiz}$\,^{1,}$\footnote{Corresponding author}
 \vspace{.2cm} \\
\footnotesize{$\,^1\;$ \textit{Department of Economics, Business, Mathematics and Statistics}} \\ \footnotesize{\textsc{University of Trieste}} \\
Via\footnotesize{ Valerio 4/1, 34127 Trieste, Italy} \vspace{0.1cm} \\
\vspace{.2cm} \\
\footnotesize{\ttfamily{gianni.bosi@deams.units.it}$\,,\quad$ \ttfamily{roberto.daris@deams.units.it}$\,,\quad$
\ttfamily{gabriele.sbaiz@deams.units.it}} \vspace{.1cm}
}
\date{\small March 1, 2024}
\begin{document}
\maketitle

\abstract{Let $X$ be an arbitrary set. Then a topology $t$ on $X$ is said to be \textit{completely useful} if every upper semicontinuous linear (total) preorder $\precsim$ on $X$ can be represented by an upper semicontinuous real-valued order preserving function. In this paper, appealing, simple and new characterizations of completely useful topologies will be proved, therefore clarifying the structure of such topologies.}

\paragraph*{\small 2020 Mathematics Subject Classification:}{\footnotesize 54F05 (primary); 91B16, 06A05 (secondary).}

\paragraph*{\small Keywords:} {\footnotesize Short topology, strongly separable topology, thin topology, locally thin topology, Aronszajn chain}

\section{Introduction}
Let $X$ be an arbitrary set. A {\em linear} ({\em total}) {\em preorder} $\precsim$ on $X$ is a {\em reflexive}, {\em transitive} and {\em linear} ({\em total}) binary relation on $X$. In Bosi and Herden \cite{B-H}, a topology $t$ on $X$ is said to be \textit{completely useful }if every {\em upper semicontinuous} linear (total) preorder $\precsim$ on $X$ can be represented by an {\em upper semicontinuous} real-valued {\em order preserving} function $f$ on the {\em topological linearly preordered space} $(X, t, \precsim)$.

We recall that a linear preorder $\precsim$ on $X$ is said to be upper semicontinuous if, for every point $x\in X$, the set $L(x):=\{y\in X: y\prec x\}$ is an open subset of $X$ (clearly, $\prec$ is the {\em strict} ({\em asymmetric}) part of $\precsim$). Further, a function $f:(X,\precsim ) \rightarrow ({\mathbb R}, \leq )$ is  said to be order preserving if $x \precsim y$ is equivalent to $f(x) \leq f(y)$ for all $x,y \in X$. In correspondence to the definition of an upper semicontinuous linear preorder $\precsim$ on $X$, the function $f: (X,t) \rightarrow ({\mathbb R}, t_{nat})$ is
said to be upper semicontinuous if $\{f<r\}:=\{z\in X:f(z)<r\}$ is an open subset of $X$ for every real number $r$ (here, $t_{nat}$ is the {\em natural topology} on $X$). Since perfectly analogous considerations hold for {\em lower semicontinuous linear preorders} and {\em lower semicontinuous real-valued order preserving functions}, respectively, we shall simply refer to semicontinuous linear (total) preorders and semicontinuous real-valued order preserving functions.

In Bosi and Herden \cite{B-H} the particular relevance of semicontinuous linear preorders and their representability by a semicontinuous real-valued order preserving function for the study of the interrelations between order and topology, and the foundations of mathematical utility theory has been motivated and underlined by many examples. Furthermore, in the same paper the fundamental problem of characterizing all completely useful topologies $t$ on $X$ has been solved to some satisfactory degree. Thus, the reader is referred to Bosi and Herden \cite{B-H} for a first discussion of this characterization problem. It is known that a completely useful topology $t$ on $X$ is in particular {\em useful}, in the sense that every {\em continuous linear preorder} admits a {\em continuous} order preserving function. It is worth noticing that the concept of a useful topology, which was introduced by Herden \cite{Herd3}, has received some attention in the literature in recent years (see, e.g., Bosi and Herden \cite{HBusefulstructure} and Bosi and Zuanon \cite{{Jota2020},Axioms}). On the contrary, the notion of a completely useful topology was not studied, despite for  a characterization by second countability of lower preorderable topologies, which was presented in Theorem 5.1 by Campi\'on et al. \cite{Cam}, and  the discussion, presented by Bosi and Franzoi \cite{Charup}, about the possibility of generalizing such concept to the case of non-total preorders.

A fundamental result (see Lemma 4.1 in \cite{B-H}) states that a completely useful topology $t$ on $X$ must be \textit{short}, i.e. there cannot exist any
uncountable ordinal $\alpha$ that can be order embedded into any of the ordered sets $\left( t,\subsetneqq\right) $ or $\left( t,\supsetneqq \right)$. The shortness of a topology $t$ on $X$ is equivalent to two fundamental properties of $t$. Indeed, $t$ is short if and only if $t$ is a \textit{hereditarily Lindel\"of }topology, i.e. for every subset $A$ of $X$ and every open covering $\mathcal{C}$ of $A$ there exists some countable subcovering $\mathcal{C}^{\prime}\subset\mathcal{C}$ of $A$, and \textit{hereditarily separable}, i.e. every subspace $\left( A,t_{|A}\right)$ of $\left( X,t\right)$ of some subset $A$ of $X$ is separable (see Proposition 4.2 in \cite{B-H}).

Unfortunately, shortness does not characterize completely useful topologies $t$ on $X$. Indeed, there exist (at least) two different types of counterexamples that show which a short topology $t$ on $X$ is not necessarily completely useful.

\begin{example}\label{ex1}
Let $X:=\mathbb{R}\times\{0,1\}$. Then we consider the natural lexicographic order $\leq_{L}$ on $X$ and choose the topology $t$ on $X$ that is generated by the sets
\begin{equation*}
d(r,i):=\{(s,j)\in X:(s,j)\leq_{L}(r,i)\}
\end{equation*}
for $r\in [0,1]$ and $i\in\{0,1\}$ and
$$K(t,k):=\{(s,j)\in X:(t,k)<_{L}(s,j)\}$$
for $t\in \mathbb{R}$ and $k=1$.

The definition of $t$ implies that $t$ is short. On the other hand, $t$ is not completely useful. Let, therefore, $\leq_{L}$ be the natural lexicographic order on $X$. Then the definition of $t$ implies that $\leq_{L}$ is a semicontinuous linear order on $X$. Since $\leq_{L}$ has uncountably many jumps it cannot be representable by a semicontinuous real-valued order preserving function $f$ on $X$. We refer to Section 2 in \cite{Jota2020} for the definition of jumps.

The chain $(\{L(x)\}_{x\in X},\subset)$ of open subsets of $X$ induces $\leq_{L}$ by setting
$$
x\leq_{L}y\Longleftrightarrow\forall z\in X, \, (y\in L(z)\Longrightarrow
x\in L(z))\, .
$$
The jumps of $\leq_{L}$ are defined by the uncountable family of open sets $\displaystyle L(s,0)=\bigcup_{(r,i)<_{L}(s,1)}L(r,i)\subsetneqq
L(s,1)\subsetneqq \bigcap_{(s,1)<_{L}(k,i)}L(k,i)=d(s,1)$, where $s$ runs through $[0,1]$, that have the property that their topological closures $\overline{L(s,0)}=\displaystyle \overline{\bigcup_{(r,i)<_{L}(s,1)}L(r,i)},\, \overline{L(s,1)}$ and $\displaystyle \overline{\bigcap_{(s,1)<_{L}(k,i)}L(k,i)}=\overline{d(s,1)}$ coincide. This property characterizes the first type of a short topology that is not completely useful.
\end{example}

\begin{example}\label{ex2}
A chain $(Z,\leq)$ is said to be \textit{short} if there exists no uncountable ordinal $\alpha$ that can be order-embedded into $(Z,<)$ or $(Z,>)$. An uncountable short chain $(A,\leq)$ is said to be an \textit{Aronszajn chain} if every subchain $(S,\leq)$ of $(A,\leq)$ that is representable by a real-valued order preserving function must be countable. Aronszajn chains have been discussed by Beardon et al. \cite{B-C-H} in connection with the general utility representation problem. The properties of Aronszajn chains or, equivalently, \textit{Specker types} are discussed in Baumgartner \cite{Bau}. Let now $(A,\leq)$ be an arbitrarily chosen Aronszajn chain that only has
countably many jumps.
Then we consider the upper order topology $t_{u}^{\leq}$ on $A$ that is generated by the sets $L(a)$, where $a$ runs through $A$. The shortness of $\leq$ implies that $t_{u}^{\leq}$ is a short topology on $A$. Clearly, $\leq$ is a semicontinuous linear (total) preorder on $X$ that can be defined with help of the sets $L(a)$ in the same way as $\leq_{L}$ in Example \ref{ex1}. Since $(A,\leq)$ only has countably many jumps it follows that $t_{u}^{\leq}$ cannot contain uncountably many (open) sets $L(x)$ such that $\displaystyle \bigcup_{L(y)\subsetneqq L(x)}L(y)\subsetneqq L(x)\subsetneqq\bigcap\limits_{L(x)\subsetneqq L(z)}L(z)$ and $\displaystyle \overline{\bigcup_{L(y)\subsetneqq L(x)}L(y)}=\overline{L(x)}=\overline{\bigcap_{L(x)\subsetneqq L(z)}L(z)}$. This means that $t_{u}^{\leq}$ is of a different type than the topology that has been considered in Example \ref{ex1}. Moreover, the definition of an Aronszajn chain implies that $\leq$ cannot be represented by some semicontinuous real-valued order preserving function. In this counterexample the topological closure $\overline{O}$ of any non-empty open subset $O$ of $A$ is \textit{too thick}. Indeed, it coincides with $A$. This means that there exist open subsets $O$ of $A$ for which there exists some Aronszajn chain that can be order-embedded into the partially ordered set $(t_{\left(O,\overline{O}\right) },\subset)$ of all open subsets $O\subset O^{\prime}\subset \overline{O}$ of $A$. This property characterizes the second type of a short
topology that is not completely useful. The reader may notice that the topology that has been considered in Example \ref{ex1} is not too thick.
\end{example}

Because of these considerations, the problem arises if the afore-presented counterexamples are typical for short topologies $t$ on $X$ that are not completely useful. We, thus, conjecture, that a topology $t$ on $X$ is completely useful if and only if $t$ is short and the afore-presented types of counterexamples cannot hold. In the main Theorem \ref{thm:char} (see below) of this paper, we shall show that this conjecture is actually true. This means that the results in Bosi and Herden \cite{B-H} on completely useful topologies $t$ on $X$ can be completed by, in some sense, the best possible characterizations of completely useful topologies $t$ on $X$.

\medskip

The paper is structured as follows. Section \ref{s:preliminary} contains the definitions and some lemmas showing the interrelations among the axioms. Section \ref{s:mainthm} is devoted to the main theorem, containing the new characterizations of completely useful topologies. Section \ref{s:conclusions} concludes the paper and presents future directions of research.

\section{Preliminaries}\label{s:preliminary}
We introduce here the main notation and definitions which will be used throughout the whole paper. First, given the topological space $(X,t)$, denote  by $\mathcal{O}$  the collection of all {\em subchains} $(\mathbf{O,\subset)}$ of $(t,\subset)$.

\begin{definition}
Let $X$ be some arbitrarily but fixed chosen set and let $t$ be an arbitrary topology on $X$. 
\begin{itemize}

\item[(a)] For every set $O\in t$, we denote by $\mathcal{O}(O,\overline{O})$ the set of all chains $(\mathbf{O},\subset)\in \mathcal{O}$ such that $O\subset O^{\prime}\subset\overline{O}$ for every set $O^{\prime}\in \mathbf{O}$.

\item[(b)] For every chain ($\mathbf{O,\subset)\in}$ $\mathcal{O}$, we denote by $P(\mathbf{O)}$ the set of all pairs $O^{\prime}\subsetneqq O\in\mathbf{O}$
for which there exists some set $O^{\prime\prime}\in \mathbf{O}$ such that $O^{\prime}\subsetneqq O^{\prime\prime}\subsetneqq O$.

\item[(c)] $t$ is said to be \textit{strongly separable} if for every chain  $(\mathbf{O},\subset)\in \mathcal{O}$ there exists some countable subset $Y$ of $X$ such that $\left(Y\cap O\right)\setminus O^{\prime} \neq\emptyset$ for every pair of sets $O^{\prime}\subsetneqq O\in P(\mathbf{O})$.

\item[(d)] $t$ is said to be \textit{strongly thin} if there exists no chain $(\mathbf{O},\subset)\in \mathcal{O}$ that contains uncountably many sets $O$
such that $\displaystyle \bigcup_{\mathbf{O}\ni O^{\prime}\subsetneqq O}O^{\prime}\subsetneqq O\subsetneqq\bigcap_{O\subsetneqq O^{\prime\prime}\in\mathbf{O}} O^{\prime\prime}$ and $\displaystyle \overline{\bigcup_{\mathbf{O}\ni O^{\prime}\subsetneqq O}O^{\prime}}=\overline{O}=\overline{\bigcap_{O\subsetneqq O^{\prime
\prime}\in\mathbf{O}}O^{\prime\prime}}$.

\item[(e)] $t$ is said to be \textit{locally thin} if there exists no open subset $O$ of $X$ for which there exists some chain $(\mathbf{O},\subset)\in  \mathcal{O}(O,\overline{O})$ that is order-isomorphic to some Aronszajn chain.

\item[(f)] A collection $\left\{x_{i}\right\} _{i\in I}$ of points $x_{i}\in X$ is said to be \textit{weakly isolated} if the cardinality of $I$ is not greater than the cardinality of the real line and if there exists some chain $(\mathbf{O},\subset)\in \mathcal{O}$ and some function $\varphi:\left\{
x_{i}\right\}_{i\in I}\rightarrow \mathbf{O}$ such that $\displaystyle \bigcup_{\mathbf{O}\ni O^{\prime}\subsetneqq \varphi(x_{i})}O^{\prime}\subsetneqq \varphi(x_{i}) \subsetneqq \bigcap_{\varphi(x_{i})\subsetneqq O^{\prime\prime}\in\mathbf{O}}O^{\prime\prime}$, $\displaystyle \overline{\bigcup_{\mathbf{O} \ni O^{\prime} \subsetneqq \varphi(x_{i})}O^{\prime}}=\overline{\varphi(x_{i})}=\overline{\bigcap_{\varphi
(x_{i})\subsetneqq O^{\prime\prime}\in\mathbf{O}}O^{\prime\prime}}$, $\overline{\varphi(x_{i})}\neq \overline{\varphi(x_{j})}$, if $i\neq j$ and $\displaystyle x_{i}\in\varphi(x_{i}) \setminus \bigcup_{\mathbf{O}\ni O^{\prime}\subsetneqq \varphi(x_{i})} O^{\prime}$.
\end{itemize}
\end{definition}

\begin{remark}
Our main Theorem \ref{thm:char} (see below) implies, in particular, that a strongly separable topology $t$ on $X$ is separable. The converse does not hold (see Examples \ref{ex1} and \ref{ex2}). This consideration justifies the concept of a strongly separable topology $t$ on $X$ given in the previous definition.
\end{remark}

Let now $(\mathbf{O,\subset)\in}$ $\mathcal{O}$ be arbitrarily chosen. Then a \textit{gap} of $\mathbf{O}$ is a pair $(O,B)$ of subsets of $X$ that satisfies the following conditions:
\begin{itemize}
\item[\textbf{G1:}] $\displaystyle O=\bigcup_{\mathbf{O}\ni O^{\prime}\subset O}O^{\prime }$ and $\displaystyle B=\bigcap_{O\subsetneqq O^{\prime \prime}\in \mathbf{O}}O^{\prime \prime}$.

\item[\textbf{G2:}] There exists an open set $O^{+}\in t\setminus\mathbf{O}$ such that $O\subsetneqq O^{+}\subsetneqq B$.
\end{itemize}
We say that $(\mathbf{O},\subset)\in \mathcal{O}$ is \textit{gap free} if it has no gaps.

\medskip

At this point, we give the definition of a {\em thin} topology.

\begin{definition}
A topology $t$ on $X$ is said to be \textit{thin} if it satisfies the
following conditions:
\begin{itemize}
\item[\textbf{T1:}] For every weakly isolated collection $\left\{ x_{i}\right\}_{i\in I}$ of points $x_{i}\in X$, there exists some open covering $\mathcal{C}$ of $\left\{ x_{i}\right\} _{i\in I}$ such that every set $O\in\mathcal{C}$ contains at most countably many points $x_{i}$.

\item[\textbf{T2:}] There exists no open subset $O$ of $X$ for which there exists some chain $(\mathbf{O},\subset)\in\mathcal{O}(O,\overline{O})$ the
cardinality of which is not greater than the cardinality of the real line and that has uncountably many gaps.
\end{itemize}
\end{definition}

\begin{remark}
It is worth noting that, the topology $t$ on $X$ that has been considered in Example \ref{ex2} is thin but not locally thin. On the other hand, the topology $t$ on $X$ that has been considered in Example \ref{ex1} is locally thin but not thin.
\end{remark}

Now the following lemma holds.

\begin{lemma}\label{lem:thin}
Let $t$ be a short topology on $X$. Then in order for $t$ to be thin it is necessary and sufficient that $t$ is strongly thin.
\end{lemma}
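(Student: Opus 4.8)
The plan is to prove both implications by contraposition, so that in either case one starts from a chain of open sets possessing uncountably many of the ``critical'' sets from the definition of strong thinness and aims at a failure of \textbf{T1} or \textbf{T2} (and conversely). To fix language, call a set $O$ of a chain $(\mathbf{O},\subset)\in\mathcal O$ a \emph{critical} set of $\mathbf O$ if, writing $U_O:=\bigcup_{\mathbf O\ni O'\subsetneqq O}O'$ and $V_O:=\bigcap_{O\subsetneqq O''\in\mathbf O}O''$, one has $U_O\subsetneqq O\subsetneqq V_O$ and $\overline{U_O}=\overline O=\overline{V_O}$. Thus $t$ is strongly thin exactly when no chain in $\mathcal O$ has uncountably many critical sets, and the critical sets are precisely the possible values of the map $\varphi$ in the definition of a weakly isolated collection. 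I will use freely that, since $t$ is short, $t$ is hereditarily Lindel\"of and hereditarily separable (Proposition 4.2 in \cite{B-H}), and that, by the very definition of shortness, a chain of open sets contains no increasing or decreasing $\omega_1$-sequence.

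\emph{Sufficiency (strongly thin $\Rightarrow$ thin).} This direction uses no shortness. Suppose $t$ is not thin. If \textbf{T1} fails, fix a weakly isolated $\{x_i\}_{i\in I}$ admitting no open covering all of whose members contain only countably many of the $x_i$; since $\{X\}$ is such a covering when $I$ is countable, $I$ is uncountable. The attached map $\varphi$ is injective because $i\neq j$ forces $\overline{\varphi(x_i)}\neq\overline{\varphi(x_j)}$, and each $\varphi(x_i)$ is a critical set of the chain $\mathbf O$ furnished by the definition; hence $\mathbf O$ has uncountably many critical sets, so $t$ is not strongly thin. If instead \textbf{T2} fails, fix an open $O$ and a chain $(\mathbf O,\subset)\in\mathcal O(O,\overline O)$ of cardinality at most that of $\R$ with uncountably many gaps; for each gap $(P,B)$ pick by \textbf{G2} an open $P^{+}\in t\setminus\mathbf O$ with $P\subsetneqq P^{+}\subsetneqq B$, and put $\mathbf O':=\mathbf O\cup\{P^{+}\}$. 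Distinct gaps of a chain are separated (one ``$B$'' lies inside the other ``$P$''), so the sets $P^{+}$ are pairwise comparable and comparable with every member of $\mathbf O$; thus $(\mathbf O',\subset)\in\mathcal O$. Discarding the at most two gaps whose ``$P$'' is empty or whose ``$B$'' is $X$, for each remaining gap one checks that in $\mathbf O'$ the sets below $P^{+}$ have union $P$ and those above $P^{+}$ have intersection $B$, and that $\overline P=\overline{P^{+}}=\overline B=\overline O$ since $O\subseteq P\subsetneqq P^{+}\subsetneqq B\subseteq\overline O$; hence $P^{+}$ is a critical set of $\mathbf O'$. As distinct gaps give distinct $P^{+}$, again $t$ is not strongly thin.

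\emph{Necessity (thin $\Rightarrow$ strongly thin).} Here shortness is used. Suppose $t$ is not strongly thin and fix $(\mathbf O,\subset)\in\mathcal O$ with an uncountable set $\mathbf D$ of critical sets; since $\mathbf D$ is uncountable it has a subset of cardinality $\aleph_1\leq|\R|$, so we may assume $|\mathbf D|=\aleph_1$. For $O_1\subsetneqq O_2$ in $\mathbf D$ one has $O_1\subseteq U_{O_2}$, whence the sets $O\setminus U_O$ are pairwise disjoint and the points $x_O\in O\setminus U_O$ are distinct. Now split according to the chain of closures $\overline O$, $O\in\mathbf D$. If some class of the relation $\overline O=\overline{O'}$ is uncountable, we may assume all $O\in\mathbf D$ share a closure $C$; using that $\mathbf O$ has no $\omega_1$-sequence one produces (via a cofinality argument) an $O_0\in\mathbf D$ with uncountably many members of $\mathbf D$ above it, and, passing to a further size-$\aleph_1$ subfamily, we may assume in addition that no two of the chosen sets are adjacent in $\mathbf O$. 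Then $\mathbf O^{\flat}:=\{U_O,V_O:O\in\mathbf D,\ O\supsetneqq O_0\}\cup\{O_0\}$ is a chain in $\mathcal O(O_0,\overline{O_0})$ of cardinality at most $|\R|$, and each pair $(U_O,V_O)$ is a gap of $\mathbf O^{\flat}$ with witness $O^{+}=O$ (this is where $O\notin\mathbf O^{\flat}$ and the non-adjacency are needed); these gaps are pairwise distinct, so $\mathbf O^{\flat}$ contradicts \textbf{T2}. If on the other hand all classes are countable, then uncountably many distinct closures occur; passing to representatives, $\{x_O\}$ becomes a weakly isolated collection of cardinality $\aleph_1\leq|\R|$ with $\varphi=\mathrm{id}$, and by hereditary Lindel\"ofness any open covering of it would admit a countable subcovering, forcing some member to contain uncountably many $x_O$ and so contradicting \textbf{T1}.

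The main obstacle is the necessity direction, and inside it the common-closure case: one must compress a possibly enormous chain $\mathbf O$ to a chain of size at most that of the continuum that still exhibits uncountably many gaps located \emph{precisely at the critical sets}. This needs hereditary Lindel\"ofness/separability to control the jumps of $\mathbf O$ and to make the pivot choice $O_0$ work, the absence of $\omega_1$-sequences, and a careful verification of \textbf{G1}–\textbf{G2} for $\mathbf O^{\flat}$, for which it is essential that $O\notin\mathbf O^{\flat}$ and that no two of the selected sets are adjacent in $\mathbf O$. By contrast the sufficiency direction is essentially formal and uses neither shortness nor any cardinality input beyond that already built into \textbf{T1} and \textbf{T2}.
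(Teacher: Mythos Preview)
Your strategy mirrors the paper's: the same two mechanisms (critical sets with pairwise distinct closures $\leftrightarrow$ failure of \textbf{T1}; critical sets sharing a common closure $\leftrightarrow$ failure of \textbf{T2}) drive both arguments, and you spell out in contrapositive form the final ``summarizing these conclusions'' step that the paper only gestures at. Your sufficiency direction is correct and simply unpacks what the paper calls trivial.

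There is, however, a genuine technical slip in the common--closure case of the necessity direction. You put
\[
\mathbf{O}^{\flat}:=\{U_O,V_O:O\in\mathbf{D},\ O\supsetneqq O_0\}\cup\{O_0\}
\]
and assert $\mathbf{O}^{\flat}\in\mathcal{O}(O_0,\overline{O_0})$. But $\mathcal{O}$ consists of subchains of $(t,\subset)$, i.e.\ chains of \emph{open} sets, and $V_O=\bigcap_{O\subsetneqq O''\in\mathbf{O}}O''$ has no reason to be open when $O$ lacks an immediate successor in $\mathbf{O}$. As written, \textbf{T2} therefore cannot be invoked for $\mathbf{O}^{\flat}$.

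The repair is immediate: drop the $V_O$'s and take $\mathbf{O}^{\flat}:=\{U_O:O\in\mathbf{D},\ O\supsetneqq O_0\}\cup\{O_0\}$, which \emph{is} a chain of open sets in $\mathcal{O}(O_0,\overline{O_0})$ of cardinality $\aleph_1\le|\R|$. For each $O\in\mathbf{D}$ above $O_0$ the pair $(U_O,B)$ with $B=\bigcap_{O'\in\mathbf{D},\,O'\supsetneqq O}U_{O'}$ satisfies \textbf{G1}; your non--adjacency hypothesis gives $V_O\subseteq U_{O'}$ for every such $O'$, whence $O\subsetneqq V_O\subseteq B$, and since non--adjacency also guarantees $O\neq U_{O'}$ for all $O'$, the set $O$ itself serves as the witness $O^{+}\in t\setminus\mathbf{O}^{\flat}$ for \textbf{G2}. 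Distinct $O$'s give distinct $U_O$'s, so uncountably many gaps remain and \textbf{T2} fails. With this correction your argument goes through and coincides with the paper's route, just made explicit.
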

\begin{proof}
The sufficiency part of the lemma is trivial. The proof of the necessity part is based upon the results in Herden and Pallack \cite{H-P} that, in
particular, imply that a short topology $t$ on $X$ cannot contain chains $\left( O,\subset\right) \in\mathcal{O}$ the cardinality of which is greater
than the cardinality of the real line. Applying this result and our assumption that $t$ is a hereditarily Lindel\"of topology on $X$, condition \textbf{T1} implies, with help of the transfinite induction argument as it has been used, for instance, in the first part of the proof of Proposition 4.2 in Bosi and Herden \cite{B-H}, that every weakly isolated collection $\left\{x_{i}\right\} _{i\in I}$ of points $x_{i}\in X$ must be countable. In addition, we may conclude from condition \textbf{T2} with help of this result that, for every open subset $O$ of $X$, each chain $\left( \mathbf{O},\subset\right) \in\mathcal{O}\left( O,\overline{O}\right) $ has at most countably many gaps. Summarizing these conclusions it follows
that $t$ must be strongly thin.
\end{proof}

Let us now consider an open subset $O$ of $X$ and some chain $(\mathbf{O},\subset)\in \mathcal{O}(O,\overline{O})$. Then the definition of a strongly thin topology $t$ on $X$ immediately implies the following lemma, whose proof is omitted.

\begin{lemma}\label{lem:2.2}
Let $t$ be a strongly thin topology on $X$. Then $(\mathbf{O,\subset)}$ has at most countably many gaps.
\end{lemma}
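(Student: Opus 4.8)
The plan is to argue by contradiction. Suppose $(\mathbf{O},\subset)$ had uncountably many gaps. For each gap $g=(O_g',B_g)$ fix, using \textbf{G2}, an open set $O_g^{+}\in t\setminus\mathbf{O}$ with $O_g'\subsetneqq O_g^{+}\subsetneqq B_g$, and put $\mathbf{O}^{*}:=\mathbf{O}\cup\{O_g^{+}:g\text{ a gap}\}$. I want to show that $(\mathbf{O}^{*},\subset)\in\mathcal{O}$ and that all but at most two of the sets $O_g^{+}$ satisfy, viewed as elements of the chain $\mathbf{O}^{*}$, both $\bigcup_{\mathbf{O}^{*}\ni O'\subsetneqq O_g^{+}}O'\subsetneqq O_g^{+}\subsetneqq\bigcap_{O_g^{+}\subsetneqq O''\in\mathbf{O}^{*}}O''$ and the equality of the three closures $\overline{\bigcup_{\mathbf{O}^{*}\ni O'\subsetneqq O_g^{+}}O'}=\overline{O_g^{+}}=\overline{\bigcap_{O_g^{+}\subsetneqq O''\in\mathbf{O}^{*}}O''}$. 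Uncountably many such sets $O_g^{+}$ would then contradict the definition of a strongly thin topology, and the lemma follows.

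First I would verify that $\mathbf{O}^{*}$ is a chain. Fix a gap $g$ and let $A\in\mathbf{O}$. If $A$ is not contained in $O_g^{+}$, then it is not contained in $O_g'\subseteq O_g^{+}$ either; since $\mathbf{O}$ is a chain and, by \textbf{G1}, $O_g'$ is the union of the members of $\mathbf{O}$ below it, each such member is strictly contained in $A$, whence $O_g'\subsetneqq A$, so $A$ is one of the sets intersected in \textbf{G1} to form $B_g$ and $O_g^{+}\subsetneqq B_g\subseteq A$. Thus $O_g^{+}$ is $\subset$-comparable with every member of $\mathbf{O}$. For two gaps $g\neq h$ I would look at the initial segments $D_g:=\{A\in\mathbf{O}:A\subseteq O_g'\}$; the initial segments of a chain are linearly ordered by inclusion, and $D_g=D_h$ would force $O_g'=O_h'$, hence $B_g=B_h$ and $g=h$. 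So, say, $D_g\subsetneqq D_h$, and picking $A\in D_h\setminus D_g$ the same computation yields $O_g'\subsetneqq A$, hence $O_g^{+}\subsetneqq B_g\subseteq A\subseteq O_h'\subsetneqq O_h^{+}$. In particular the witnesses are pairwise strictly comparable, hence pairwise distinct, and --- the fact to be reused below --- $O_g^{+}\subsetneqq O_h^{+}$ implies $O_g^{+}\subseteq O_h'$.

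Next I would check that, apart from at most two exceptional gaps at the two extremes of $\mathbf{O}$, every witness is a forbidden set of $\mathbf{O}^{*}$. For a non-extremal gap one has $O\subseteq O_g'\subsetneqq O_g^{+}\subsetneqq B_g\subseteq\overline{O}$, so $O_g'$, $O_g^{+}$ and $B_g$ all have closure $\overline{O}$, and consequently so do the union of the members of $\mathbf{O}^{*}$ below $O_g^{+}$ and the intersection of those above it; the closure conditions of strong thinness therefore hold automatically. As for the strict inclusions, the members of $\mathbf{O}$ lying strictly below $O_g^{+}$ are exactly those contained in $O_g'$ (by the comparability analysis) and they union to $O_g'$, while every witness $O_h^{+}\subsetneqq O_g^{+}$ is contained in $O_g'$ by the reused fact; hence $\bigcup_{\mathbf{O}^{*}\ni O'\subsetneqq O_g^{+}}O'=O_g'$, which is a proper subset of $O_g^{+}$ by \textbf{G2}. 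Dually, $\bigcap_{O_g^{+}\subsetneqq O''\in\mathbf{O}^{*}}O''=B_g\supsetneqq O_g^{+}$.

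Combining these two steps, uncountably many gaps of $(\mathbf{O},\subset)$ would yield uncountably many pairwise distinct sets $O_g^{+}$ in the chain $\mathbf{O}^{*}\in\mathcal{O}$, each satisfying the double strict inclusion together with the equality of the three closures, contradicting the definition of a strongly thin topology; hence $(\mathbf{O},\subset)$ has at most countably many gaps. I expect the main obstacle to lie in the third step: a priori the newly added witnesses might accumulate inside the enlarged chain $\mathbf{O}^{*}$ and ``fill in'' a gap, destroying the properness of the inclusions, and it is exactly the separation statement ``$O_g^{+}\subsetneqq O_h^{+}$ implies $O_g^{+}\subseteq O_h'$'' obtained in the chain verification that rules this out.
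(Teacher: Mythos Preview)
Your proof is correct. The paper omits the proof entirely, remarking only that the lemma follows immediately from the definition of a strongly thin topology; your argument is a careful unpacking of exactly this implication---insert a witness $O_g^{+}$ for each gap into the chain, observe that inside $\mathcal{O}(O,\overline{O})$ every closure is forced to equal $\overline{O}$ so the closure condition of strong thinness is automatic, and check that the enlarged family is still a chain in which each $O_g^{+}$ sits as a forbidden element. Your separation claim $O_g^{+}\subsetneqq O_h^{+}\Rightarrow O_g^{+}\subseteq O_h'$ is precisely the point that makes the word ``immediately'' honest.
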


Now, we recall that $t$ is said to satisfy \textbf{TIP} (\textbf{T}ransfinite \textbf{I}nduction \textbf{P}rocedure) if, for every open subset $O$ of $X$ and every pair of open subsets $O\subset O_{0}^{\prime}\subsetneqq O_{0}\subset\overline{O}$ of $X$, the following construction by transfinite induction always leads to a countable subchain of $\mathcal{O}(O,\overline {O})$:
\begin{itemize}
\item In the first step we set $\mathbf{O}_{0}:=\{O_{0}^{^{\prime}},O_{0}\}$.

\item At non-limit steps $\alpha$ we choose an arbitrary gap $(O,B)$ of $\mathbf{O}_{\alpha-1}$ and some open set $O^{+}\in t\setminus\mathbf{O}_{\alpha-1}$ such that $O\subsetneqq O^{+}\subsetneqq B$. Then $\mathbf{O}_{\alpha}$ is the union of $\mathbf{O}_{\alpha-1}$ with $\left\{ O^{+}\right\}$. In case that $\mathbf{O}_{\alpha-1}$ has no gaps we set $\mathbf{O:=O}_{\alpha-1}\,$ and this finishes the transfinite induction process.

\item At limit steps $\alpha$ we set $\mathbf{O}_{\alpha}:=\bigcup\limits_{\beta<\alpha}\mathbf{O}_{\beta}$.
\end{itemize}

Let now $\leq$ be an arbitrary semicontinuous linear (total) preorder on $X$ and consider for every point $x\in X$ the corresponding open subset $L(x)$. Finally, Lemma \ref{lem:2.2} implies, with help of a transfinite induction argument (that is similar to the construction in the appendix of Beardon et al. \cite{B-C-H}), the validity of the following lemma.

\begin{lemma}
Let $t$ be a strongly thin topology on $X$. Then in order for $t$ to be locally thin it is necessary
and sufficient that $t$ satisfies \textbf{TIP}.
\end{lemma}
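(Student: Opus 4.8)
The plan is to prove both implications by contraposition, using in each direction a transfinite ``gap-filling'' recursion inside a suitable chain, in the spirit of the construction in the appendix of Beardon et al. \cite{B-C-H}.

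\emph{Sufficiency} (\textbf{TIP} $\Rightarrow$ locally thin). Suppose $t$ is not locally thin, so that there are an open set $O$ and a chain $(\mathbf{A},\subseteq)\in\mathcal{O}(O,\overline{O})$ which is order-isomorphic to an Aronszajn chain. Fix two distinct sets $O_{0}^{\prime}\subsetneqq O_{0}$ in $\mathbf{A}$ and perform the \textbf{TIP} construction starting from $\{O_{0}^{\prime},O_{0}\}$, at every non-limit step choosing the new set $O^{+}$ inside $\mathbf{A}$. The key point is that this run cannot stop at a countable stage: if at a countable stage the chain $\mathbf{O}_{\alpha-1}\subseteq\mathbf{A}$ admitted no gap witnessed by a member of $\mathbf{A}$, then every $W\in\mathbf{A}\setminus\mathbf{O}_{\alpha-1}$ would have to equal one of the two ``sides'' $\bigcup\{O'\in\mathbf{O}_{\alpha-1}:O'\subseteq W\}$ and $\bigcap\{O'\in\mathbf{O}_{\alpha-1}:W\subseteq O'\}$ of the cut it induces on $\mathbf{O}_{\alpha-1}$; sending each such $W$ to the corresponding point of the Dedekind completion of $\mathbf{O}_{\alpha-1}$ then order-embeds $\mathbf{A}$ into that completion, which, being the Dedekind completion of a countable chain, order-embeds into $\mathbb{R}$ --- impossible for an Aronszajn chain. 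Since a member of $\mathbf{A}$ filling a gap is a legitimate choice in the construction, this run never gets stuck, runs through all countable ordinals, and produces an uncountable subchain of $\mathcal{O}(O,\overline{O})$; hence \textbf{TIP} fails.

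\emph{Necessity} (locally thin $\Rightarrow$ \textbf{TIP}). Suppose \textbf{TIP} fails; then for some open set $O$ there is a run of the construction (which can be kept inside $\mathcal{O}(O,\overline{O})$) that does not terminate with a countable chain. Let $\lambda$ be the least ordinal with $\mathbf{O}_{\lambda}$ uncountable; then $\lambda$ is a limit ordinal of uncountable cofinality, $\mathbf{O}_{\beta}$ is countable for all $\beta<\lambda$, and $\mathbf{O}_{\lambda}=\bigcup_{\beta<\lambda}\mathbf{O}_{\beta}\in\mathcal{O}(O,\overline{O})$. Let $E$ be the uncountable set of sets inserted at the non-limit stages below $\lambda$. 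The claim is that $E$ is an Aronszajn chain; since $E\subseteq\mathbf{O}_{\lambda}$ yields $E\in\mathcal{O}(O,\overline{O})$, this contradicts local thinness. Now $E$ is uncountable and, being a subchain of $t$, short, so it remains to check that every subchain $T\subseteq E$ admitting a real-valued order-preserving function is countable. If such a $T$ were uncountable, with a countable order-dense $D\subseteq T$, then $D\subseteq\mathbf{O}_{\mu}$ for some $\mu<\lambda$ (as $\lambda$ has uncountable cofinality); each $e\in T$ inserted after stage $\mu$ sat in a gap $(\widehat{O}_{e},B_{e})$ of $\mathbf{O}_{\beta_{e}-1}\supseteq\mathbf{O}_{\mu}$, so every member of $D$ is $\subseteq\widehat{O}_{e}$ or $\supseteq B_{e}$, and density of $D$ in $T$ forces $e$ to be the only element of $T$ strictly between $\widehat{O}_{e}$ and $B_{e}$ and gives $B_{e_{1}}\subseteq\widehat{O}_{e_{2}}$ for $e_{1}\subsetneqq e_{2}$ among these sets. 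Applying Lemma \ref{lem:2.2} and the defining property of a strongly thin topology to the chain $\mathbf{O}_{\lambda}$ (recall that all members of $\mathcal{O}(O,\overline{O})$ share the closure $\overline{O}$), one shows that only countably many of these $e$ can be ``genuinely isolated'' in $\mathbf{O}_{\lambda}$, while the remaining ones would force uncountably many pairwise disjoint nonempty open intervals of $\mathbb{R}$ to avoid the image of $D$ --- a contradiction. Hence $T$ is countable and $E$ is Aronszajn.

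\emph{Main obstacle.} The hard part is the necessity direction, and specifically the passage just sketched: ``an uncountable chain produced by an unending gap-filling must be non-representable'' is \emph{false} in general, and it is exactly the hypothesis that $t$ is strongly thin --- through Lemma \ref{lem:2.2} (few gaps) and the definition of strong thinness (few ``doubly isolated'' sets) --- that excludes the competing obstructions, namely gap-rich chains and order-continuous ``$\mathbb{R}$-like'' chains realized with set-theoretic jumps, leaving an Aronszajn chain as the only possibility. One should also verify the routine point that the \textbf{TIP} recursion can always be carried out without leaving $\mathcal{O}(O,\overline{O})$, so that the uncountable residue genuinely belongs to some $\mathcal{O}(O,\overline{O})$.
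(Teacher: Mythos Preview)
Your plan follows exactly the route the paper indicates: the paper's own ``proof'' is a single sentence pointing to Lemma~\ref{lem:2.2} together with a transfinite gap--filling argument \`a la the appendix of Beardon et al.\ \cite{B-C-H}, and that is precisely what you carry out, in considerably more detail than the paper itself supplies. The sufficiency direction is fine as you wrote it.

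There is, however, one concrete slip in the necessity direction. You assert that $E$ is short ``being a subchain of $t$'', but shortness of $t$ is \emph{not} among the hypotheses of the lemma --- only strong thinness is assumed. The repair is easy and in fact uses the hypothesis you have: since $E\subseteq\mathbf{O}_{\lambda}\in\mathcal{O}(O,\overline{O})$, every member of $E$ has closure $\overline{O}$; if $E$ contained a strictly increasing (or decreasing) $\omega_{1}$--sequence $(O_{\xi})_{\xi<\omega_{1}}$, then each $O_{\xi+1}$ would satisfy
\[
\bigcup_{\eta\le\xi}O_{\eta}=O_{\xi}\subsetneqq O_{\xi+1}\subsetneqq O_{\xi+2}=\bigcap_{\eta\ge\xi+2}O_{\eta}
\quad\text{and}\quad
\overline{O_{\xi}}=\overline{O_{\xi+1}}=\overline{O_{\xi+2}}=\overline{O},
\]
giving uncountably many sets of the type forbidden by strong thinness. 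So $E$ is short, but for that reason and not the one you stated.

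Beyond this, your closing paragraph honestly flags that the last step --- ruling out a representable uncountable $T\subseteq E$ via Lemma~\ref{lem:2.2} and strong thinness --- is only sketched. That sketch is on the right track (countably many jumps of $T$, countably many ``doubly isolated'' $e$'s in $\mathbf{O}_{\lambda}$ by strong thinness, the rest forcing disjoint $D$--free intervals), but it would need to be written out to count as a proof; since the paper gives no more than a pointer here either, you are not behind the paper, only not ahead of it.
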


\section{The characterization theorem}\label{s:mainthm}

\begin{theorem}\label{thm:char}
Let $t$ be an arbitrary topology on $X$. Then the following assertions are equivalent:
\begin{itemize}
\item[(i)] $t$ is completely useful.

\item[(ii)] $t$ is strongly separable.

\item[(iii)] $t$ is short, thin and locally thin.

\item[(iv)] $t$ is a hereditarily separable, thin, locally thin and hereditarily Lindel\"of topology.
\end{itemize}
\end{theorem}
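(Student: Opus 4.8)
The strategy is to establish the cycle of implications $(i)\Rightarrow(ii)\Rightarrow(iii)\Rightarrow(iv)\Rightarrow(i)$, with the equivalence of (iii) and (iv) essentially bookkeeping via the decomposition of shortness into hereditary separability plus the hereditary Lindel\"of property (Proposition 4.2 in \cite{B-H}), so that the genuine content lies in the three remaining arrows. First I would prove $(iii)\Leftrightarrow(iv)$: since Lemma~\ref{lem:thin} already gives that, under shortness, thinness coincides with strong thinness, the only thing to check is the well-known fact that a topology is short if and only if it is both hereditarily separable and hereditarily Lindel\"of (recalled in the Introduction), so (iii) and (iv) say the same thing once one unwinds the definitions; nothing deep happens here.

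The implication $(i)\Rightarrow(ii)$ is proved by contraposition: assuming $t$ is \emph{not} strongly separable, I would produce an upper semicontinuous linear preorder on $X$ admitting no upper semicontinuous order-preserving representation. Given a chain $(\mathbf{O},\subset)\in\mathcal{O}$ witnessing the failure of strong separability, one canonically associates a linear preorder $\precsim$ on $X$ by declaring $x\precsim y$ iff every member of $\mathbf{O}$ containing $y$ also contains $x$ (exactly as in Example~\ref{ex1}), after first enlarging $\mathbf{O}$ so that it is a maximal chain and the sets $L(x)$ are cofinal in it; upper semicontinuity of $\precsim$ follows because each $L(x)$ is a union of members of $\mathbf{O}$, hence open. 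The failure of strong separability means no countable set meets $O\setminus O'$ for all $(O',O)\in P(\mathbf{O})$, and this translates precisely into: the family of "jumps" of $\precsim$ (or the relevant family of equivalence classes squeezed between consecutive elements of the chain) is uncountable in a way that obstructs any countable-range-determined upper semicontinuous representation. Here I would lean on the separability criteria for semicontinuous representability developed in \cite{B-H} (their Lemma 4.1 / Section 4 machinery) to conclude that $\precsim$ cannot be represented by an upper semicontinuous real-valued order-preserving function, contradicting complete usefulness.

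For $(ii)\Rightarrow(iii)$ I would again argue by contraposition, splitting into the three ways (iii) can fail. If $t$ is not short, then $(t,\subsetneqq)$ or $(t,\supsetneqq)$ order-embeds an uncountable ordinal $\alpha$; that embedding \emph{is} a chain $(\mathbf{O},\subset)\in\mathcal{O}$ of order type $\alpha$, and then $P(\mathbf{O})$ contains uncountably many pairs $O'\subsetneqq O$ at successor positions, each forcing a distinct point into any set $Y$ of the kind required by strong separability — so $Y$ cannot be countable. If $t$ is short but not thin, Lemma~\ref{lem:thin} says $t$ is not strongly thin, so there is a chain $(\mathbf{O},\subset)$ with uncountably many sets $O$ satisfying the "$\bigcup\subsetneqq O\subsetneqq\bigcap$ with equal closures" condition; these uncountably many sets again generate uncountably many pairs in $P(\mathbf{O})$ (or in an enlargement of $\mathbf{O}$) whose differences must each be hit by $Y$, again defeating countability. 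Finally, if $t$ is short and thin but not locally thin, there is an open $O$ and a chain $(\mathbf{O},\subset)\in\mathcal{O}(O,\overline O)$ order-isomorphic to an Aronszajn chain $(A,\le)$; since an Aronszajn chain is uncountable and every subchain with a real-valued order-preserving representation is countable, one shows that for \emph{any} countable $Y\subset X$ the set of pairs in $P(\mathbf{O})$ whose difference $Y$ fails to meet is still uncountable — essentially because "being hit by a point of $Y$" would induce a real-valued order-preserving function on an uncountable subchain of $A$. Assembling these three cases gives $(ii)\Rightarrow(iii)$.

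The implication $(iv)\Rightarrow(i)$ is the constructive heart of the theorem and, I expect, the main obstacle. Given an upper semicontinuous linear preorder $\precsim$ on $X$ one must build an upper semicontinuous order-preserving $f:X\to\mathbb{R}$. The natural route is the \textbf{TIP} construction: for each open $O$ and each initial pair $O\subset O_0'\subsetneqq O_0\subset\overline O$ one runs the transfinite induction filling in gaps, and Lemma 2.7 (strong thinness $+$ local thinness $\Leftrightarrow$ \textbf{TIP}) guarantees the process terminates at a countable chain. Using hereditary Lindel\"ofness and hereditary separability one covers $X$ by countably many such "pieces," on each piece the countable gap-free chain produced by \textbf{TIP} yields a countable order-dense family inside $\{L(x)\}_{x\in X}$ that separates all the jumps, and the classical Debreu-style / Herden-style gluing then produces the desired upper semicontinuous utility. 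The delicate points will be (a) checking that the chains delivered by \textbf{TIP}, run locally, can be patched together coherently across a countable cover of $X$ without reintroducing uncountably many jumps, and (b) verifying upper semicontinuity of the glued function, which requires that the countable order-dense set be chosen compatibly with the topology (each relevant $L(x)$ being open). For this I would follow the representation lemmas of \cite{B-H} and the appendix technique of Beardon et al.\ \cite{B-C-H}, which is exactly the transfinite-induction scheme that Lemma 2.7 was set up to invoke.
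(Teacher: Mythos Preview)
Your overall cycle is logically sound, but you and the paper place the hard work in different implications, and this matters. The paper takes $(i)\Leftrightarrow(ii)$ as already established (it is exactly the equivalence of assertions $(i)$ and $(iv)$ in Theorem~4.11 of \cite{B-H}), and its genuine new contribution is the implication $(iii)\wedge(iv)\Rightarrow(ii)$. You, by contrast, propose to reprove $(i)\Rightarrow(ii)$ and $(iv)\Rightarrow(i)$ from scratch; in particular your ``constructive heart'' $(iv)\Rightarrow(i)$ would have to subsume both the old work of \cite{B-H} (the passage from strong separability to a semicontinuous utility) and the new content of the present paper. Your TIP-and-glue sketch for that step is plausible at the level of slogans but is precisely where the paper's authors decided \emph{not} to attack directly.

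What the paper actually does for $(iii)\wedge(iv)\Rightarrow(ii)$ is an argument you do not anticipate: given a chain $(\mathbf{O},\subset)$, they first quotient by the equivalence $\overline{O}=\overline{O'}$, obtain a chain $\mathbf{O}'$ of representatives, and build by transfinite induction a tree $(T,\supset)$ whose nodes at each level are pairwise disjoint nonempty open sets of the form $O'\setminus\overline{O''}$. Shortness bounds the width; \emph{separability} (via a countable dense $S$) bounds the height, ruling out an Aronszajn tree; hence $T$ is countable and one point from each node gives a countable $Y'$ separating $\mathbf{O}'$. Then local thinness (through the chain of implications in the proof of Theorem~4.11 of \cite{B-H}) handles each closure-equivalence class internally, and strong thinness (via Lemma~\ref{lem:thin}) guarantees only countably many classes are nontrivial. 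This closure-class/tree decomposition is the key idea, and it is absent from your plan.

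Two smaller remarks on your $(ii)\Rightarrow(iii)$ case split. In the ``not short'' case your appeal to $P(\mathbf{O})$ needs care: immediate successor pairs in a well-ordered chain are \emph{not} in $P(\mathbf{O})$ (nothing lies strictly between), so you must pass to pairs at distance two and argue disjointness of the resulting differences along a cofinal subfamily. In the ``not thin'' case your sentence ``these uncountably many sets again generate uncountably many pairs in $P(\mathbf{O})$'' is doing real work that you have not written down; the paper, going the other direction, simply says the corresponding verification is easy, but in your direction you should spell out why the equal-closure condition forces the required pairs to demand distinct witnesses from $Y$.
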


\begin{proof}
$(i) \Longleftrightarrow (ii):$ This equivalence corresponds to the equivalence of the assertions $(i)$ and $(iv)$ of Theorem 4.11 in Bosi and Herden \cite{B-H}.

$(i)\wedge (ii)\Longrightarrow (iii):$ The implication ``$(v) \Longrightarrow (vi)$'' of Theorem 4.11 or,
alternatively, Lemma 4.1 in Bosi and Herden \cite{B-H} imply that $t$ is short. In addition, the afore-presented transfinite induction procedure, in combination with the arguments of the proof of the implication ``$(i) \Longrightarrow (ii)$'' of Theorem 4.11 in Bosi and Herden \cite{B-H}, allow us to conclude that $t$ must be locally thin. Finally, it is an easy task to verify that a strongly separable topology $t$ on $X$ is strongly thin and, thus, also thin.

$(iii)\Longleftrightarrow (iv):$ This equivalence is an immediate consequence of Proposition 4.2 in Bosi and Herden \cite{B-H}.

$(iii) \wedge (iv) \Longrightarrow (ii):$ Let $(\mathbf{O},\subset)\in \mathcal{O}$ be an arbitrarily chosen chain. Then two sets $O^{\prime},O\in\mathbf{O}$ are said to be equivalent if $\overline{O^{\prime}}=\overline{O}$. The corresponding equivalence classes are abbreviated as usual by $\left[ O\right]$. Now we choose in every equivalence class $\left[ O\right]$ some fixed set $O$. The subchain of $(\mathbf{O},\subset)$ that consists of these sets $O$ is denoted by $(\mathbf{O}^{\prime},\subset)$. In a first step we show that there exists some countable subset $Y^{\prime}$ of $X$ such that $Y^{\prime}\cap(O\setminus O^{\prime})\neq\emptyset$ for every pair of sets $O^{\prime}\subsetneqq O\in\mathbf{O}^{\prime}$. Therefore, we have to distinguish the following four cases: 
\begin{itemize}
\item[(1)] $(\mathbf{O}^{\prime},\subset)$ have a first and a last element; 
\item[(2)] $(\mathbf{O}^{\prime},\subset)$ have a first but no last element; 
\item[(3)] $(\mathbf{O}^{\prime},\subset)$ have no first but a last element; 
\item[(4)] $(\mathbf{O}^{\prime},\subset)$ have neither a first nor a last element.
\end{itemize} 
Since all these cases can be settled by analogous arguments we concentrate on the case $(\mathbf{O}^{\prime},\subset)$ to neither have a first nor a last element. Let $(\mathbb{Z},\leq)$ be the chain of integers. The shortness of $t$ implies the existence
of some countable subchain $(\mathbf{O}_{0}^{\prime},\subset):=(\{O_{z}\}_{z\in \mathbb{Z}},\subset)$ of $(\mathbf{O}^{\prime},\subset)$ such that for every set $O\in\mathbf{O}^{\prime}$ there exist sets $O_{z},O_{z^{\prime}}\in \mathbf{O}_{0}^{\prime}$ such that $O_{z}\subset O\subset O_{z^{\prime
}}$ and $O_{z}\subset O_{z^{\prime}}\Longleftrightarrow z\leq z^{\prime}$. Now we set $U_{z}:=O_{z}\setminus\overline{O}_{z-1}$ for
every $z\in\mathbb{Z}$. The definition of $\mathbf{O}^{\prime}$ implies that $U_{z}\neq\emptyset$ for every $z\in\mathbb{Z}$ and that $U_{z}\cap U_{z^{\prime}}=\emptyset$ for every pair of different integers $z,z^{\prime}$. In this way we, thus, have obtained a
countable set $T_{0}:=\{U_{z}:z\in\mathbb{Z}\}$ of pairwise disjoint open subsets of $X$. Starting with $\mathbf{O}_{0}^{\prime}$ and $T_{0}$ we now construct by transfinite induction a tree $(T,\supset)$ that at each level $\gamma$ consists of pairwise disjoint (non-empty) open subsets of $X$.

Let, therefore, $0<\alpha$ be not a limit ordinal. Then we consider the set $G(\mathbf{O}_{\alpha-1}^{\prime})$ of all pairs $(O,B)$ that satisfy with
respect to $\mathbf{O}_{\alpha-1}^{\prime}$ condition \textbf{G1} of the definition of a gap. In case that $G(\mathbf{O}_{\alpha-1}^{\prime})=\emptyset$ we set $(T,\supset):=(T_{\alpha-1},\supset)$ and this finishes the transfinite induction process. Otherwise, we choose, for every pair $(O,B)\in G(\mathbf{O}_{\alpha-1}^{\prime})$, an arbitrary pair of sets $O^{\prime\prime}\subsetneqq O^{\prime}\in \mathbf{O}^{\prime}$ such that $O\subset O^{\prime\prime}\subsetneqq O^{\prime}\subset B$ in order to then consider in this way the obtained
additional open sets $O^{\prime}\setminus\overline{O^{\prime\prime}}$. Because of the definition of $\mathbf{O}^{\prime}$ no additional set $O^{\prime}\setminus\overline{O^{\prime\prime}}$ is empty. This conclusion allows us to define $\mathbf{O}_{\alpha}^{\prime}$ as the union of $\mathbf{O}_{\alpha-1}^{\prime}$ with all new sets $O^{\prime\prime}$ and $O^{\prime}$ and $T_{\alpha}$ as the union of $T_{\alpha-1}$ with all additional open subsets $O^{\prime}\setminus \overline{O^{\prime\prime}}$of $X$ that have been obtained in the afore-described way. Obviously, $(T_{\alpha},\supset)$ is a tree that at each level $\gamma$ consists of pairwise disjoint (non-empty) open subsets of $X$.

In case that $\alpha$ is a limit ordinal we set $\displaystyle \mathbf{O}_{\alpha}^{\prime}:=\bigcup_{\beta<\alpha}\mathbf{O}_{\beta}^{\prime}$ and $\displaystyle T_{\alpha}:=\bigcup_{\beta<\alpha}T_{\beta} $. Also in this case $(T_{\alpha},\supset)$, clearly, is a tree that at each level $\gamma$ consists of pairwise disjoint (non-empty) open subsets of $X$.

Since $t$ is short $T$ cannot contain uncountably many pairwise disjoint open sets. Because of the construction of $(T,\supset)$ we have, in particular, that $(T,\supset)$ cannot contain branches of uncountable length and that $\left(T,\supset\right)$ at each level $\gamma$ only contains
countably many branches. The reader may notice that $\gamma$ does not necessarily correspond to the ordinal $\alpha$ that has been considered in the construction of $\left( T,\supset\right) $. Of course, the construction of $(T,\supset)$ does not exclude that $(T,\supset)$ is an \textit{Aronszajn tree} (see, for instance, Jech \cite{Jec}), because the previous observations only imply that the least upper bound of the lengths of all branches of $(T,\supset)$ is $\aleph_{1}$. We now show that the least upper bound of the lengths of all branches of $(T,\supset)$ cannot be greater than the first infinite cardinal $\aleph_{0}$. Indeed, the separability of $t$ implies the existence of some countable subset $S$ of $X$ such that $\overline{S}=X$. The countability of $S$ implies with help of the construction of $(T,\supset)$ that there exists some countable ordinal $\alpha$ such that $S\cap U=\emptyset$ for every ordinal $\xi >\alpha$ and every open set $U\in T_{\xi}\setminus T_{\alpha}$, which contradicts the property that $S$ is a dense subset of $X$. Hence, we may actually conclude that the least upper bound of the lengths of all branches of $(T,\supset)$ cannot be greater than $\aleph_{0}$. We recall that an analogous argument implies that a separable chain does not allow the construction of an Aronszajn tree $(T,\supset)$ that consists of non-empty open intervals that at each level of $(T,\supset)$ are pairwise disjoint. Since $(T,\supset)$ is not an Aronszajn tree, we may summarize our considerations in order to conclude that $T$ is a countable set. By choosing in every open set $U\in T$ some point $y$ we, thus, obtain a countable subset $Y^{\prime}$ of $X$. Let now some pair $O^{\prime}\subsetneqq O\in\mathbf{O}^{\prime}$ of sets be arbitrarily chosen. At this point, we have to prove that $(Y^{\prime}\cap O)\setminus O^{\prime}\neq\emptyset$. The construction of $T_{0}$ implies that there exists some set $U_{z}\in T_{0}$ such that $(U_{z}\cap O)\setminus O^{\prime}\neq\emptyset$. Hence, an analysis of the construction of $(T,\supset)$ allows us to conclude that there exists some ordinal $\alpha$ and some set $U\in T_{\alpha}$ such that $U\subset (O\setminus O^{\prime})$, which means that $(Y^{\prime}\cap O)\setminus O^{\prime}\neq\emptyset$.

Now, since $t$ is locally thin we may apply the chain ``$(ii)\Longrightarrow (iii)\Longrightarrow (v)$'' of implications of the proof of Theorem 4.11 in Herden and Bosi \cite{B-H}. It follows that for every equivalence class $\left[ O\right] $ there exists some countable subset $Y_{O}$ of $X$ such that $(Y_{O}\cap O^{\prime})\setminus O^{\prime\prime}\neq\emptyset$ for every pair of sets $O^{\prime\prime}\subsetneqq O^{\prime}\in P\left( \left[ O\right]\right)$. Therefore, the proof of the desired implication will be finished if we are able to show that there exist at most countably many equivalence classes $\left[ O\right]$ that contain pairs of sets $O^{\prime\prime}\subsetneqq O\subsetneqq O^{\prime}\in\mathbf{O}$. Let $\mathcal{E}$ be the set of these crucial equivalence classes. Indeed, in case that $\mathcal{E}$ is a countable set we may take $\displaystyle Y:=Y^{\prime}\cup \bigcup_{\left[ O\right] \in\mathcal{E}}Y_{O}$ and nothing remains to be shown. In order to prove that $\mathcal{E}$ is a countable set we choose in every equivalence class $\left[ O\right] \in\mathcal{E}$ arbitrary but fixed sets $O^{\prime\prime}\subsetneqq O\subsetneqq O^{\prime}\in\mathbf{O}$ and consider the subchain $(\mathbf{O}^{\prime\prime},\subset)$ of $(\mathbf{O},\subset)$ that consists of these sets $O^{\prime\prime}\subsetneqq O\subsetneqq O^{\prime}$. Since $t$ is thin and, due to Lemma \ref{lem:thin} also strongly thin, we may conclude that $(\mathbf{O}^{\prime\prime},\subset)$ is a countable chain, which implies that $\mathcal{E}$ actually is a countable set. This last conclusion completes the proof of the theorem.
\end{proof}

Throughout the literature, the only theorem on completely useful topologies that is well known is the Rader's theorem \cite{Rad}, which states that every second countable topology $t$ on $X$ is completely useful. Unfortunately Rader's proof of this theorem contained the same mistake as the first proof of Debreu \cite{Deb54} of his famous Open Gap Lemma. This mistake has been discovered by Mehta \cite{Meh}. Meanwhile, there exist several correct proofs of Rader's theorem (see, for instance, Isler \cite{Isl} or Richter \cite{Ric}). In this context, Theorem \ref{thm:char} widely generalizes Rader's theorem. Indeed, there may even exist completely useful Hausdorff-topologies $t$ on $X$ that are not first countable. Let, therefore, $X:=[0,1]$ where $[0,1]$ denotes the standard real interval. Then we consider the topology $t$ on $X$ that is generated by the closed, respectively half open half closed, intervals $[0,r]$ and $]s,1]$, where $r$ runs through all reals that are greater than $0$ but not greater than $1$ and $s$ runs through all reals that are smaller than $1$ but not smaller than $0$. Clearly, Theorem \ref{thm:char} implies that $t$ is a completely useful topology on $X$. Furthermore, the definition of $t$ implies that $t$ is a Hausdorff-topology on $X$ that is not first countable.

It seems that Theorem \ref{thm:char} hardly can be improved. The assumption that $t$ must be thin cannot be weakened. Indeed, there exist even compact Hausdorff-spaces that are short and locally thin but not thin as the following example shows.

\begin{example}
Let $[0,1]$ be the set of all reals that are not smaller than $0$ and not greater than $1$. Then we choose the set $X:=[0,1]\times\{0,1\}$ endowed with its natural (linear) lexicographic order $\leq_{L}$. In a similar way as in Example \ref{ex1} we consider the topology $t$ on $X$ that is generated by the sets
$$d(r,i):=\{(s,j)\in X:(s,j)\leq_{L}(r,i)\}$$
for $r\in [0,1]$ and $i\in\{0,1\}$,
$$K(t,k):=\{(s,j)\in X:(t,k)<_{L}(s,j)\}$$
for $t\in\lbrack0,1]$ and $k=1$,
$$U_{0}:=\{(r,0):r [0,1]\}$$
and
$$U_{1}:=\{(r,1): r\in [0,1]\}\, .$$

With help of the definition of $t$ and the standard argument that proves the compactness of the real interval $[0,1]$ it follows that $t_{\big|U_{0}}$ as
well as $t_{\big|U_{1}}$ are compact Hausdorff-topologies on $U_{0}$ and $U_{1}$ respectively. Since $U_{0}$ and $U_{1}$ define a partition of $X$ into two
disjoint open subsets of $X$ we, thus, may conclude that $\left( X,t\right)$ is a compact Hausdorff-space. In addition, we may conclude as in Example \ref{ex1} that $t$ is short, locally thin but not thin. This concludes the example.
\end{example}

\section{Conclusions} \label{s:conclusions}
In this paper we have presented several characterizations of completely useful topologies, i.e. topologies for which every upper semicontinuous linear (total) preorder admits an upper semicontinuous order preserving function. Such characterizations are based on the types of non-representability of chains. Results of this kind are interesting not only from a purely theoretical point of view, but also for their (well-known) applications to Economics and Decision Theory. In a future paper, we shall consider the possibility of incorporating the {\em Souslin Hypothesis} (see \cite{Sou}), in order to simplify the characterizations.

\paragraph*{Acknowledgements.}
G. Sbaiz is member of the INdAM (Italian Institute for Advanced Mathematics) group.

\end{document}